\documentclass[11pt]{article}
\usepackage[margin=1in]{geometry}
\usepackage[latin2]{inputenc}
\usepackage[english]{babel}
\usepackage{amsmath}
\usepackage[all=normal,paragraphs=tight,bibliography=tight]{savetrees}
\usepackage{todonotes}
\usepackage{amsthm}
\usepackage{amssymb}
\usepackage{microtype}
\usepackage{xspace}
\usepackage{comment}
\usepackage{letltxmacro}
\usepackage{comment}
\usepackage{enumerate}
\usepackage{tikz}
\usepackage{url}
\usepackage{eufrak}

\usepackage{graphicx}
\usepackage{caption}
\usepackage{subcaption}

\newtheorem{theorem}{Theorem}[section]
\newtheorem{lemma}[theorem]{Lemma}
\newtheorem{claim}{Claim}

\theoremstyle{definition}



\newcommand{\Oh}{\ensuremath{\mathcal{O}}}

\newcommand{\clString}{{\sc{Closest String}}\xspace}
\newcommand{\clSubstring}{{\sc{Closest Substring}}\xspace}

\newcommand{\cPattern}{{\sc{Consensus Patterns}}\xspace}
\newcommand{\clique}{{\sc{Clique}}\xspace}

\newcommand{\Len}{L}
\newcommand{\dst}{d}
\newcommand{\Cnst}{\mathcal{S}}
\newcommand{\Ham}{\mathcal{H}}
\newcommand{\Hw}{\mathcal{H}}
\newcommand{\eps}{\varepsilon}
\newcommand{\bLarge}{\alpha}
\newcommand{\bSmall}{\beta}
\newcommand{\orth}{\rho}
\newcommand{\bl}{\ell}
\newcommand{\Sel}{\mathcal{T}}
\newcommand{\Forb}{\mathcal{F}}
\newcommand{\diam}[1]{\overline{#1}}

\def\cqedsymbol{\ifmmode$\lrcorner$\else{\unskip\nobreak\hfil
\penalty50\hskip1em\null\nobreak\hfil$\lrcorner$
\parfillskip=0pt\finalhyphendemerits=0\endgraf}\fi} 

\newcommand{\cqed}{\renewcommand{\qed}{\cqedsymbol}}


\newcommand{\executeiffilenewer}[3]{%
\ifnum\pdfstrcmp{\pdffilemoddate{#1}}%
{\pdffilemoddate{#2}}>0%
{\immediate\write18{#3}}\fi%
} 
\newcommand{%
\executeiffilenewer{figures/.svg}{figures/.pdf}%
{inkscape -z -D --file=figures/.svg %
--export-pdf=figures/.pdf --export-latex}%
{\input{figures/.pdf_tex}}}[1]{%
\executeiffilenewer{figures/#1.svg}{figures/#1.pdf}%
{inkscape -z -D --file=figures/#1.svg %
--export-pdf=figures/#1.pdf --export-latex}%
{\input{figures/#1.pdf_tex}}}%



\title{Lower bounds for approximation schemes for {\sc{Closest String}}}
\author{
  Marek Cygan\thanks{
    Institute of Informatics, University of Warsaw, Poland, \texttt{cygan@mimuw.edu.pl}.
    Supported by Polish National Science Centre grant DEC-2012/05/D/ST6/03214.
  }
  \and
  Daniel Lokshtanov\thanks{
    Department of Informatics, University of Bergen, Norway, \texttt{daniello@ii.uib.no}.
Supported by the BeHard grant under the
recruitment programme of the of Bergen Research Foundation.
  }
  \and
  Marcin Pilipczuk\thanks{
    Institute of Informatics, University of Warsaw, Poland, \texttt{marcin.pilipczuk@mimuw.edu.pl}.
    Supported by Polish National Science Centre grant DEC-2012/05/D/ST6/03214.
  }
  \and
  Micha\l{} Pilipczuk\thanks{
    Institute of Informatics, University of Warsaw, Poland, \texttt{michal.pilipczuk@mimuw.edu.pl}. Supported by Polish National Science Centre grant DEC-2013/11/D/ST6/03073 and by the Foundation for Polish Science via the START stipend programme. During the work on these results, Micha\l{} Pilipczuk held a post-doc position at Warsaw Center of Mathematics and Computer Science.}
  \and 
  Saket Saurabh\thanks{
    Institute of Mathematical Sciences, India, \texttt{saket@imsc.res.in}, and
    Department of Informatics, University of Bergen, Norway, \texttt{Saket.Saurabh@ii.uib.no}.
Supported by PARAPPROX, ERC starting grant no.~306992.
  }
}

\date{}

\begin{document}

\begin{titlepage}
\def\thepage{}
\thispagestyle{empty}
\maketitle

\begin{abstract}
In the \clString problem one is given a family $\Cnst$ of equal-length strings over some fixed alphabet, and the task is to find a string $y$ that minimizes the maximum Hamming distance between $y$ and a string from $\Cnst$. While polynomial-time approximation schemes (PTASes) for this problem are known for a long time [Li et al.; J. ACM'02], no {\em{efficient}} polynomial-time approximation scheme (EPTAS) has been proposed so far. In this paper, we prove that the existence of an EPTAS for \clString is in fact unlikely, as it would imply that $\mathrm{FPT}=\mathrm{W}[1]$, a highly unexpected collapse in the hierarchy of parameterized complexity classes. Our proof also shows that the existence of a PTAS for \clString with running time $f(\eps)\cdot n^{o(1/\eps)}$, for any computable function $f$, would contradict the Exponential Time Hypothesis.

\end{abstract}
\end{titlepage}

\section{Introduction}\label{sec:intro}
\clString and \clSubstring are two computational problems motivated by questions in molecular biology connected to identifying functionally similar regions of DNA or RNA sequences, as well as by applications in coding theory. In \clString we are given a family $\Cnst$ of strings over some fixed alphabet $\Sigma$, each of length $\Len$. The task is to find one string $y\in \Sigma^\Len$ for which $\max_{x\in \Cnst} \Ham(x,y)$ is minimum possible, where $\Ham(x,y)$ is the {\em{Hamming distance}} between $x$ and $y$, that is, the number of positions on which $x$ and $y$ have different letters. We will consider both the optimization variant of the problem where the said distance is to be minimized, and the decision variant where an upper bound $\dst$ is given on the input, and the algorithm needs to decide whether there exists a string $y$ with $\max_{x\in \Cnst} \Ham(x,y)\leq d$. \clSubstring is a more general problem where the strings from the input family $\Cnst$ all have length $m\geq \Len$, and we look for a string $y\in \Sigma^\Len$ that minimizes $\max_{x\in \Cnst}\min_{x'\textrm{ substring of }x} \Ham(x',y)$. In other words, we look for $y$ that can be fit as close as possible to a substring of length $\Len$ of each of the input strings from $\Cnst$.

Both \clString and \clSubstring, as well as numerous variations on these problems, have been studied extensively from the point of view of approximation algorithms. Most importantly for us, for both of these problems there are classic results providing {\em{polynomial-time approximation schemes}} ({\em{PTASes}}): for every $\eps>0$, it is possible to approximate in polynomial time the optimum distance within a multiplicative factor of $(1+\eps)$. The first PTASes for these problems were given by Li et al.~\cite{limawang}, and they had running time bounded by $n^{\Oh(1/\eps^4)}$. This was later improved by Andoni et al.~\cite{AndoniIP06} to $n^{\Oh(\frac{\log 1/\eps}{\eps^2})}$, and then by Ma and Sun~\cite{MaS09} to $n^{\Oh(1/\eps^2)}$, which constitutes the current frontier of  knowledge. We refer to the works~\cite{Boucher15,LiMW02a,GrammNR03,limawang,MaS09,Marx08} for a broad introduction to biological applications of \clString, \clSubstring, and related problems, as well as pointers to relevant literature.

One of the immediate questions stemming from the works of Li et al.~\cite{limawang}, Andoni et al.~\cite{AndoniIP06}, and Ma and Sun~\cite{MaS09}, is whether either for \clString or \clSubstring one can also give an {\em{efficient polynomial-time approximation scheme}} ({\em{EPTAS}}), i.e., an approximation scheme that for every $\eps>0$ gives a $(1+\eps)$-approximation algorithm with running time $f(\eps)\cdot n^{\Oh(1)}$, for some computable function $f$. In other words, the degree of the polynomial should be independent of $\eps$, whereas the exponential blow-up (inevitable due to NP-completeness) should happen only in the multiplicative constant standing in front of the running time. EPTASes are desirable from the point of view of applications, since they provide approximation algorithms that can be useful in practice already for relatively small values of $\eps$, whereas running times of general PTASes are usually prohibitive.

For the more general \clSubstring problem, this question was answered negatively by Marx~\cite{Marx08} using the techniques from parameterized complexity. More precisely, Marx considered various parameterizations of \clSubstring, and showed that when parameterized by $\dst$ and $|\Cnst|$, the problem remains W[1]-hard even for the binary alphabet. This means that the existence of a fixed-parameter algorithm with running time $f(\dst,|\Cnst|)\cdot n^{\Oh(1)}$, where $n$ is the total size of the input, would imply that $\mathrm{FPT}=\mathrm{W[1]}$, a highly unexpected collapse in the parameterized complexity. This result shows that, under $\mathrm{FPT}\neq \mathrm{W[1]}$, also an EPTAS for \clSubstring can be excluded. Indeed, if such an EPTAS existed, then by setting any $\eps<\frac{1}{\dst}$ one could in time $f(\dst)\cdot n^{\Oh(1)}$ distinguish instances with optimum distance value $\dst$ from the ones with optimum distance value $\dst+1$, thus solving the decision variant in fixed-parameter tractable (FPT) time. Using more precise results about the parameterized hardness of the \clique problem, Marx~\cite{Marx08} showed that, under the assumption of {\em{Exponential Time Hypothesis}} ({\em{ETH}}), which states that {\sc{3-SAT}} cannot be solved in time $\Oh(2^{\delta n})$ for some $\delta>0$, one even cannot expect PTASes for \clSubstring with running time $f(\eps)\cdot n^{o(\log (1/\eps))}$ for any computable function $f$. We refer to a survey of Marx~\cite{Marx08-survey} for more examples of links between parameterized complexity and the design of approximation schemes.

The methodology used by Marx~\cite{Marx08}, which is the classic connection between parameterized complexity and EPTASes that dates back to the work of Bazgan~\cite{bazgan:thesis} and of Cesati and Trevisan~\cite{CesatiT97}, completely breaks down when applied to \clString. This is because this problem actually does admit an FPT algorithm when parameterized by $\dst$. An algorithm with running time $d^d\cdot n^{\Oh(1)}$ was proposed by Gramm et al.~\cite{GrammNR03}. Later, Ma and Sun~\cite{MaS09} gave an algorithm with running time $2^{\Oh(d)}\cdot |\Sigma|^d\cdot n^{\Oh(1)}$, which is more efficient for constant-size alphabets. Both the algorithms of Gramm et al. and of Ma and Sun are known to be essentially optimal under ETH~\cite{LokshtanovMS11}, and nowadays they constitute textbook examples of advanced branching techniques in parameterized complexity~\cite{platypus}. Therefore, in order to settle the question about the existence of an EPTAS for \clString, one should look for a substantial refinement of the currently known techniques.

An approach for overcoming this issue was recently used by Boucher et al.~\cite{Boucher15}, who attribute the original idea to Marx~\cite{Marx08-survey}. Boucher et al. considered a problem called \cPattern, which is a variation of \clSubstring where the goal function is the total sum of Hamming distances between the center string and best-fitting substrings of the input strings, instead of the maximum among these distances. The problem admits a PTAS due to Li et al.~\cite{LiMW02a}, and was shown by Marx~\cite{Marx08} to be fixed-parameter tractable when parameterized by the target distance $\dst$. Despite the latter result, Boucher et al.~\cite{Boucher15} managed to prove that the existence of an EPTAS for \cPattern would imply that $\mathrm{FPT}=\mathrm{W[1]}$. The main idea is to provide a reduction from a W[1]-hard problem, such as \clique, where the output target distance $\dst$ is not bounded by a function of the input parameter $k$ (indeed, the existence of such a reduction would prove that $\mathrm{FPT}=\mathrm{W}[1]$), but the multiplicative gap between the optimum distances yielded for yes- and no-instances is $1+\frac{1}{g(k)}$, for some computable function $g$. Even though the output parameter is unbounded in terms of $k$, an EPTAS for the problem could be still used to distinguish between output instances obtained from yes- and no-instances of \clique in FPT time, thus proving that $\mathrm{FPT}=\mathrm{W[1]}$.

\paragraph*{Our contribution} In this paper we provide a negative answer to the question about the existence of an EPTAS for \clString by proving the following theorem.

\begin{theorem}\label{thm:closest-hardness}
The following assertions hold:
\begin{itemize}
\item Unless $\mathrm{FPT}=\mathrm{W[1]}$, there is no EPTAS for \clString over binary alphabet.
\item Unless ETH fails, there is no PTAS for \clString over binary alphabet with running time $f(\eps)\cdot n^{o(1/\eps)}$, for any computable function~$f$.
\end{itemize}
\end{theorem}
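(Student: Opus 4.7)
The plan is to follow the gap-amplification methodology pioneered by Boucher et al.~\cite{Boucher15} for \cPattern. I would construct a polynomial-time reduction that, given a $k$-\clique instance $(G,k)$, produces a \clString instance $(\Cnst,\dst)$ over the binary alphabet satisfying: (i)~if $G$ has a $k$-clique then some center string attains distance at most $\dst$, and (ii)~if $G$ has no $k$-clique then every candidate center has distance at least $\dst \cdot (1 + 1/g(k))$, for some computable $g(k)$. Crucially, $\dst$ is allowed to be unbounded in $k$ (consistent with \clString being FPT in $\dst$), but the multiplicative gap is controlled by $k$ alone. Given such a reduction, an EPTAS for \clString invoked with $\eps = \frac{1}{2g(k)}$ distinguishes yes- from no-instances of \clique in time $f(1/(2g(k))) \cdot n^{\Oh(1)}$, contradicting $\mathrm{FPT}\neq\mathrm{W}[1]$.

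For the construction I would partition the positions of the strings into three types of blocks. The \emph{vertex-selection blocks}, one per clique slot $i\in\{1,\dots,k\}$, force the center string to encode a single vertex of $G$ via a carefully chosen binary encoding; this is implemented by placing in $\Cnst$ many copies of each vertex's codeword so that any optimal center is, up to small deviations, one such codeword. The \emph{edge-verification blocks}, one per unordered pair of slots $\{i,j\}$, use codewords for the edges of $G$ and impose an additional Hamming penalty unless the two slots can be jointly explained by a common edge of $G$. Finally, \emph{filler blocks} add a large, choice-independent contribution to the Hamming distance, which inflates $\dst$ substantially so that a tiny additive penalty in a single edge-verification block still yields only a small relative gap. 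In a yes-instance, the center encoding the clique matches the baseline exactly on every verification block; in a no-instance, at least one verification block must suffer a violation of absolute magnitude $\Delta$.

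The main obstacle, as in most hardness reductions for \clString, will be designing the codewords so that over a binary alphabet the intended encoding structure is genuinely enforced by the maximum-distance objective rather than merely preferred. Because the objective is the maximum, not the sum, any deviation from a pure codeword at the center must simultaneously respect all strings in $\Cnst$ and not reduce the distance below $\dst$ against any of them; this is notoriously delicate in the binary setting. A natural approach is to use codewords that are long relative to the gadget sizes and balanced in Hamming weight, and to introduce auxiliary strings in $\Cnst$ that penalize every deviation from a legal encoding symmetrically from multiple directions. Aligning the baseline so that yes-instances hit $\dst$ exactly, while no-instances always overshoot by $\Delta$, is the most error-prone quantitative part of the argument.

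For the ETH refinement one additionally tracks the dependence of $g(k)$ on $k$. Since ETH excludes an $f(k)\cdot n^{o(k)}$ algorithm for \clique, it suffices to arrange $g(k)=\Oh(k)$, i.e., a multiplicative gap of $1+\Omega(1/k)$: then a PTAS with running time $f(\eps)\cdot n^{o(1/\eps)}$ invoked with $\eps=\Theta(1/k)$ would solve \clique in time $f'(k)\cdot n^{o(k)}$, contradicting ETH. A naive reduction as above tends to produce a gap of order $1/\binom{k}{2}$ because a single violation is diluted across $\binom{k}{2}$ edge-verification blocks and a large filler; obtaining the sharper $1/k$ bound therefore requires an extra amplification, for instance by scaling the weight (number of copies) of the verification blocks relative to the filler by a factor of $\Theta(k)$, or by aggregating the $k-1$ verification blocks incident to a single slot so that a violation forces a proportionally larger deviation.
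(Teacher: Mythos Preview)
Your high-level framework matches the paper exactly: establish a gap-producing reduction from \clique with multiplicative gap $1+\Omega(1/k)$, then invoke the approximation scheme with $\eps=\Theta(1/k)$ and appeal to the standard \clique lower bounds. Where your proposal falls short is in the actual construction, and in one place it is outright wrong.

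The vertex-selection mechanism you suggest---``placing in $\Cnst$ many copies of each vertex's codeword so that any optimal center is, up to small deviations, one such codeword''---does not work for \clString: the objective is the \emph{maximum} distance, so duplicating a constraint string is a no-op, and even one copy per vertex does not force the center near any single codeword (a center equidistant from all of them is at least as good). The paper's mechanism, which is its main technical idea and is absent from your sketch, is to first build an almost-orthogonal family $\Sel\subseteq\{0,1\}^\bl$ of $n$ balanced strings with pairwise distances in $(\tfrac12\pm\orth)\bl$, and then to take as selection constraints on block $B_i$ every string $y$ in the \emph{forbidden set} $\Forb$ of strings not nearly antipodal to any member of $\Sel$. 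The key lemma shows that if the center restricted to $B_i$ is $\bSmall\bl$-far from every element of $\Sel$, then some $y\in\Forb$ witnesses distance at least $(1-\bSmall)\bl$ on that block, pushing the maximum above threshold. This is precisely the ``penalize every deviation symmetrically from multiple directions'' step that you correctly identify as the crux but do not supply.

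The paper also dispenses with your filler blocks and separate edge-verification blocks. There are only the $k$ vertex blocks $B_1,\dots,B_k$ (plus a short balancing block); adjacency is checked by placing $\diam{\map(u)},\diam{\map(v)}$ on $B_i,B_j$ for each non-adjacent or equal pair $(u,v)$, with all-$0$/all-$1$ patterns on the remaining $k{-}2$ blocks. Because every codeword in $\Sel$ has Hamming weight exactly $\bl/2$, each of those $k{-}2$ blocks contributes exactly $\bl/2$ regardless of the center's choice there, so the baseline $d=\Theta(k\bl)$ arises automatically with no filler. A single violation then costs an additive $\Theta(\bl)$, giving the $1+\Omega(1/k)$ gap directly; the $1/\binom{k}{2}$ dilution you worry about never occurs, and no amplification step is needed.
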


Thus, one should not expect an EPTAS for \clString, whereas for PTASes there is still a room for improvement between the running time of $n^{\Oh(1/\eps^2)}$ given by Ma and Sun~\cite{MaS09} and the lower bound of Theorem~\ref{thm:closest-hardness}. It is worth noting that our $f(\eps)\cdot n^{o(1/\eps)}$ time lower bound for $(1+\eps)$-approximating \clString also holds for the more general \clSubstring problem. This yields a significantly stronger lower bound than the previous $f(\eps)\cdot n^{o(\log (1/\eps))}$ lower bound of Marx~\cite{Marx08}.

Our proof of Theorem~\ref{thm:closest-hardness} follows the methodology proposed Marx~\cite{Marx08-survey} and used by Boucher et al.~\cite{Boucher15} for \cPattern. 
The following theorem, which is the main technical contribution of this work, states formally the properties of our reduction.

\begin{theorem}\label{thm:main}
There is an integer $c$ and an algorithm that, given an instance $(G,k)$ of \clique, works in time $2^k\cdot n^{\Oh(1)}$ and outputs an instance $(\Cnst,\Len,\dst)$ of \clString over alphabet $\{0,1\}$ with the following properties:
\begin{itemize}
\item If $G$ contains a clique on $k$ vertices, then there is a string $w\in \{0,1\}^\Len$ such that $\Ham(w,x)\leq \dst$ for each $x\in \Cnst$.
\item If $G$ does not contain a clique on $k$ vertices, then for each string $w\in \{0,1\}^\Len$ there is $x\in \Cnst$ such that $\Ham(w,x)>(1+\frac{1}{ck})\cdot \dst$. 
\end{itemize}
\end{theorem}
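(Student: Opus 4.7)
The plan is to build a gap-producing reduction from \clique in the style initiated by Marx and refined by Boucher et al.~for \cPattern: the output center string is to be interpreted as a concatenation of $k$ codewords encoding the $k$ vertices of the sought clique, and the family $\Cnst$ consists of strings whose role is to (a) force each block of the center to spell out a valid vertex codeword, and (b) certify that each pair of chosen vertices is adjacent in $G$. To make the arithmetic clean, I will pick a balanced binary code $\sigma \colon V(G) \to \{0,1\}^q$ in which all codewords have the same Hamming weight and any two distinct codewords are at Hamming distance exactly $\Delta$ (obtained, e.g., by padding suitable indicator vectors to an equal-weight form).

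The output string length will be $\Len = kq + R$, split into consecutive blocks $B_1, \ldots, B_k$ of length $q$ and a common padding region of length $R$. The intended center for a clique $\{v_1, \ldots, v_k\}$ is $w^\star = \sigma(v_1) \sigma(v_2) \cdots \sigma(v_k)\, p$, for a canonical pattern $p \in \{0,1\}^R$. Two kinds of input strings will populate $\Cnst$: for every pair $(i, u) \in [k] \times V(G)$, a ``vertex string'' that places $\sigma(u)$ in block $B_i$ and a carefully chosen neutral pattern in the other blocks and the padding; and for every triple $(i, j, uv)$ with $i \neq j$ and $uv \in E(G)$, an ``edge string'' that places $\sigma(u)$ in $B_i$, $\sigma(v)$ in $B_j$, and the neutral pattern elsewhere. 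The neutral pattern and the padding length $R$ are tuned so that $w^\star$ is at Hamming distance exactly $\dst$ from every input string, which fixes the value of $\dst$ in the statement and immediately yields the completeness direction.

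For soundness I will argue about two types of deviation of a hypothetical good center $w$. If in some block $B_i$ the content of $w$ is not equal to any codeword $\sigma(v)$, then its distance to the nearest codeword is at least some positive $\delta$ intrinsic to the code; an averaging argument over the $|V(G)|$ vertex strings associated with index $i$ shows that at least one of them is at distance $\dst + \Omega(\delta)$ from $w$. If, on the other hand, the $B_i$-contents of $w$ are all genuine codewords $\sigma(v_1), \ldots, \sigma(v_k)$ but some pair $v_iv_j$ is a non-edge of $G$, then every edge string for the pair $(i,j,\cdot)$ overshoots $\dst$ in its $B_i$- and $B_j$-blocks by approximately $\Delta$, giving an absolute excess of $\Omega(\Delta)$. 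In both scenarios the absolute gap is $\Theta(\Delta)$, and by tuning the padding so that $\dst \approx k\Delta$ (the padding absorbs most of the built-in distance), the relative gap becomes $1 + \Omega(1/k)$, as required.

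The main obstacle is ruling out ``spread cheating'': a center $w$ that is not of the canonical form but nevertheless beats the threshold $(1+1/(ck))\dst$ on every single input string by distributing small mismatches across many positions rather than concentrating them in one place. To prevent this, the vertex and edge families must be numerous and diverse enough that any block-wise deviation is paid for by at least one individual input string, not just on average; this typically forces one to replicate the families a controlled number of times and to choose the neutral pattern to be ``maximally unhelpful,'' e.g.\ a derandomized balanced string reused across blocks, so that agreeing with one input string in an unintended block cannot subsidize disagreements elsewhere. The $2^k \cdot n^{\Oh(1)}$ time budget should be enough to afford an auxiliary preprocessing step such as enumerating an $(n,k)$-universal set, or iterating through $2^k$ sign patterns separating the $k$ blocks, which is how I expect the exponential factor in the running time to enter. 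Verifying that the soundness gap of order $\dst/(ck)$ genuinely survives this fine-tuning, for a single universal constant $c$, is the technical heart of the proof.
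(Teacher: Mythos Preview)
Your gadgets are pointing in the wrong direction, and this is a genuine gap rather than a detail to be tuned later. In \clString the center must be close to \emph{every} string in $\Cnst$, so to force $w[B_i]$ to be a codeword you need a family of strings such that every codeword is simultaneously close to all of them, while any non-codeword is far from at least one. Your vertex strings, which place $\sigma(u)$ on block $B_i$, do the opposite. With the indicator-vector code you suggest (codewords $e_1,\dots,e_n\in\{0,1\}^n$, pairwise distance $\Delta=2$), a true codeword $e_{v_i}$ has distance $0$ to one vertex string and $\Delta$ to the rest, so the maximum over the family is $\Delta$; but the all-zeros string, which is not a codeword, is at distance $1<\Delta$ from \emph{every} vertex string. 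Thus a non-codeword center can strictly beat the threshold that the intended center is forced to meet, and no averaging argument can rescue this: the maximum is what matters, not the sum. The paper handles this by introducing, for each block, the complementary family of \emph{forbidden} strings $\Forb=\{y:\Ham(x,y)\le(1-\bLarge)\bl\text{ for all }x\in\Sel\}$; by construction every codeword is close to all of them, and a separate lemma shows that any string $\bSmall\bl$-far from all codewords is $(1-\bSmall)\bl$-far from some forbidden string.

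The same reversal bites your adjacency check. Encoding the \emph{edges} of $G$ cannot punish a non-edge: if $v_iv_j\notin E(G)$, then for every edge string at positions $(i,j)$ the contribution on those two blocks is at most $2\Delta$, which is exactly the bound the intended clique center already has to meet on edge strings whose endpoints differ from $v_i,v_j$. There is no overshoot. The paper instead creates a string for every \emph{non}-adjacent ordered pair $(u,v)$ and places the \emph{complements} $\overline{\iota(u)},\overline{\iota(v)}$ on $B_i,B_j$; if the center happens to pick $u$ and $v$, both blocks contribute nearly $\bl$, producing the gap. Finally, the ``spread cheating'' issue you flag is not an add-on but is baked into the construction: on all remaining blocks the paper puts all-zeros or all-ones according to a function $\phi:[k]\setminus\{i\}\to\{0,1\}$, and includes one string for \emph{each} of the $2^{k-1}$ choices of $\phi$, so that for any center there is a $\phi$ making every neutral block contribute at least $\bl/2$. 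This is precisely where the $2^k$ in the running time comes from.
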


The statement of Theorem~\ref{thm:main} is similar to the core of the hardness proof of Boucher et al.~\cite{Boucher15}. However, our reduction is completely different from the reduction of Boucher et al., because the causes of the computational hardness of of \clString and \cPattern are quite orthogonal to each other. In \cPattern the difficulty lies in {\em picking} the right substrings of the input strings. Once these substrings are known the center string is easily computed in polynomial time, since we are minimizing the sum of the Hamming distances. In \clString there are no substrings to pick, we just have to find a center string for the given input strings. This is a computationally hard task because we are minimizing the maximum of the Hamming distances to the center, rather than the sum.

Theorem~\ref{thm:closest-hardness} follows immediately by combining Theorem~\ref{thm:main} with the known parameterized hardness results for \clique, gathered in the following theorem, and setting $\eps=\frac{1}{ck}$.

\begin{theorem}[cf. Theorem 13.25 and Corollary 14.23 of~\cite{platypus}]\label{thm:clique-hardness}
The following assertions hold:
\begin{itemize}
\item Unless $\mathrm{FPT}=\mathrm{W}[1]$, \clique cannot be solved in time $f(k)\cdot n^{\Oh(1)}$ for any computable function~$f$.
\item Unless ETH fails, \clique cannot be solved in time $f(k)\cdot n^{o(k)}$ for any computable function~$f$.
\end{itemize}
\end{theorem}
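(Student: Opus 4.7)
The plan is to derive both claims from standard parameterized reductions, since Theorem~\ref{thm:clique-hardness} is cited rather than proved fresh. For the first part, W[1]-hardness of \clique, I would reduce from a canonical W[1]-complete problem, for instance \textsc{Weighted $q$-CNF Satisfiability}. The textbook Downey--Fellows reduction organizes potential witnesses (variable/value pairs) into $k$ vertex groups, and places an edge between two witnesses from different groups whenever they are jointly consistent; a $k$-clique then translates exactly into a consistent selection of $k$ witnesses, which encodes a satisfying assignment of weight $k$.

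For the second part, I would give a reduction from $3$-SAT under ETH. After applying the sparsification lemma I may assume the formula has $O(n)$ clauses. I would then partition the $n$ variables into $k$ groups of size $n/k$, create one vertex per partial assignment to a group that satisfies all clauses lying entirely inside that group (so at most $2^{n/k}$ vertices per group), and add an edge between two vertices from different groups exactly when the union of their partial assignments satisfies every clause whose variables come from those two groups. The resulting graph has $N = O(k \cdot 2^{n/k})$ vertices and contains a $k$-clique iff the formula is satisfiable. A hypothetical algorithm solving \clique in time $f(k)\cdot N^{o(k)}$ would therefore, upon choosing $k$ as a suitably growing function of $n$, translate into a $2^{o(n)}$-time algorithm for $3$-SAT, contradicting ETH.

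The delicate point --- and the reason the original proof takes some care --- is that $f$ is an arbitrary computable function, so absorbing $f(k)$ into a $2^{o(n)}$ term requires being careful in how $k$ is chosen relative to $n$. The standard remedy is a diagonalization argument over values of $k$, which converts any hypothetical FPT-style speedup for \clique into a genuine sub-exponential speedup for $3$-SAT; this is the main technical hurdle, since the combinatorial reductions themselves are routine. Once this diagonal argument is in place, both parts of the theorem follow directly as recorded in the cited chapters of~\cite{platypus}.
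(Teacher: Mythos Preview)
The paper does not prove Theorem~\ref{thm:clique-hardness} at all: it is stated as a known result and attributed to~\cite{platypus} (Theorem~13.25 and Corollary~14.23), with no argument given. You correctly recognize this in your first sentence.

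Your sketch of how one \emph{would} prove it is essentially the standard textbook route and is sound. For the first item, \clique is in fact one of the original W[1]-complete problems in Downey--Fellows, so one may simply quote that; your proposed reduction from weighted satisfiability is also a valid path. For the second item, the variable-grouping reduction you describe is exactly the Chen--Huang--Kanj--Xia construction reproduced in~\cite{platypus}, and you correctly flag the one genuinely nontrivial step, namely the diagonalization needed to handle the arbitrary computable $f$. Since the paper itself offers no proof to compare against, there is nothing further to reconcile.
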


The main idea of the proof of Theorem~\ref{thm:main} is to encode the $n$ vertices of the given graph $G$ as an ``almost orthogonal'' family $\Sel$ of strings from $\{0,1\}^\bl$, for some $\bl=\Oh(\log n)$. Strings from $\Sel$ are used as identifiers of vertices of $G$, and the fact that they are almost orthogonal means that the identifiers of two distinct vertices of $G$ differ on approximately $\bl/2$ positions. On the other hand $\bl=\Oh(\log n)$, so the whole space of strings into which $V(G)$ is embedded has size polynomial in $n$. Using these properties, the reduction promised in Theorem~\ref{thm:main} is designed by a careful construction.

\paragraph*{Notation.}
By $\log p$ we denote the base-$2$ logarithm of $p$. For a positive integer $n$, we denote $[n]=\{1,2,\ldots,n\}$. The length of a string $x$ is denoted by $|x|$. For an alphabet $\Sigma$ and two equal-length strings $x,y$ over $\Sigma$, the {\em{Hamming distance}} between $x$ and $y$, denoted $\Ham(x,y)$, is the number of positions on which $x$ and $y$ have different letters. If $\Sigma=\{0,1\}$ is the binary alphabet, then the {\em{Hamming weight}} of a string $x$ over $\Sigma$, denoted $\Hw(x)$, is the number of $1$s in it. The {\em{complement}} of a string $x$ over a binary alphabet, denoted $\diam{x}$, is obtained from $x$ by replacing all $0$s with $1$s and vice versa. Note that if $|x|=|y|=n$, then $\Ham(x,y)=n-\Ham(\diam{x},y)=n-\Ham(x,\diam{y})=\Ham(\diam{x},\diam{y})$.

\section{Selection gadget}\label{sec:selection}
For the rest of this paper, we fix the following constants: $\orth=1/100$, $\bLarge=1/10$, $\bSmall=1/20$. 
Since $C$ is divisible by $100$, we have that that $\orth\bl$, $\bLarge\bl$, and $\bSmall\bl$ are all integers. 
First, we prove that among binary strings of logarithmic length one can find a linearly-sized family of ``almost orthogonal'' strings of balanced Hamming weight. The proof is by a simple greedy argument.

\begin{lemma}\label{lem:selection-strings}
There exist positive integers $C$ and $N$, where $C$ is divisible by $100$, with the following property. Let $n>N$ be any integer, and let us denote $\bl=C\cdot \lceil \log n\rceil$. Then there exists a set $\Sel\subseteq \{0,1\}^\bl$ with the following properties:
\begin{enumerate}
\item\label{pr:size} $|\Sel|=n$,
\item\label{pr:hamw} $\Hw(x)=\bl/2$ for each $x\in \Sel$, and
\item\label{pr:orth} $(1/2-\orth)\bl<\Ham(x,y)<(1/2+\orth)\bl$ for each distinct $x,y\in \Sel$.
\end{enumerate}
Moreover, given $n$, $\Sel$ can be constructed in time polynomial in $n$.
\end{lemma}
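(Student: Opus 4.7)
The plan is to use a simple greedy argument based on a counting estimate: for a fixed balanced string $x\in\{0,1\}^\bl$, the number of balanced strings $y$ whose Hamming distance from $x$ lies outside the interval $((1/2-\orth)\bl,(1/2+\orth)\bl)$ is an exponentially small (in $\bl$) fraction of the set of all balanced strings. Picking $C$ (hence $\bl=C\lceil\log n\rceil$) large enough makes this fraction smaller than $1/n$, so we can extend a partial selection greedily, one string at a time.

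First I would count. If $x,y$ are both balanced, then letting $a$ be the number of $1$s of $y$ on positions where $x$ has a $1$, one gets $\Ham(x,y)=\bl-2a$, and the number of such $y$ with a given $a$ is exactly $\binom{\bl/2}{a}^{2}$. So the total number of balanced strings is $\binom{\bl}{\bl/2}=\sum_{a}\binom{\bl/2}{a}^{2}$, and the ``bad'' count — strings $y$ with $|a-\bl/4|\ge \orth\bl/2$ — can be bounded by a standard Chernoff-type estimate (or Stirling-based comparison of binomial coefficients). The outcome is a bound of the form
\[
\#\{y\text{ balanced}:\Ham(x,y)\notin((1/2-\orth)\bl,(1/2+\orth)\bl)\}\le 2^{-\delta\bl}\binom{\bl}{\bl/2},
\]
where $\delta=\delta(\orth)>0$ is an absolute constant.

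Then I would run the obvious greedy procedure. Starting from $\Sel=\emptyset$, at each step pick any balanced string that lies in the good distance range with respect to all already-chosen elements. When $|\Sel|<n$, the number of balanced strings forbidden by the union bound is at most $n\cdot 2^{-\delta\bl}\binom{\bl}{\bl/2}$; taking $C$ so large that $n\cdot 2^{-\delta C\log n}<1$ (e.g.\ $C>2/\delta$, and then rounded up to a multiple of $100$) keeps this strictly less than $\binom{\bl}{\bl/2}$, so at least one valid extension remains. Iterating $n$ times yields a set $\Sel$ of the required size with properties (\ref{pr:hamw}) and (\ref{pr:orth}). The parity/divisibility conditions are free, because $C$ is a multiple of $100$ and therefore $\bl$ is divisible by $100$, making $\bl/2$, $\orth\bl$, $\bLarge\bl$ and $\bSmall\bl$ all integers. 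Finally, the construction is polynomial: $\binom{\bl}{\bl/2}\le 2^{\bl}=n^{C}$, so at each of the $n$ steps we can enumerate all balanced strings and test feasibility against the (at most $n$) previously chosen ones in $n^{O(1)}$ time.

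I do not see a serious obstacle; the only thing to be careful about is to verify the Chernoff bound for the hypergeometric-style quantity $\binom{\bl/2}{a}^{2}/\binom{\bl}{\bl/2}$ and to choose $C$ (as a multiple of $100$) large enough that both the union bound $n\cdot 2^{-\delta \bl}<1$ holds and the threshold $N$ exceeds any trivial boundary case, so the greedy never gets stuck before $|\Sel|$ reaches $n$.
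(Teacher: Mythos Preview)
Your proposal is correct and follows essentially the same greedy-plus-counting argument as the paper: pick balanced strings one at a time, using an entropy/Chernoff tail bound together with a union bound to show that fewer than $\binom{\ell}{\ell/2}$ balanced strings are ever excluded. The only cosmetic difference is that the paper counts \emph{all} strings (balanced or not) at bad distance from a chosen $x$ via the standard bound $\sum_{i\le(1/2-\orth)\ell}\binom{\ell}{i}\le 2^{H_2(1/2-\orth)\ell}$ and then compares with $|\mathcal{D}|\ge 2^{\ell}/(\ell+1)$, whereas you restrict the count to balanced bad strings via the identity $\#\{y:\Hw(y)=\ell/2,\ \Ham(x,y)=\ell-2a\}=\binom{\ell/2}{a}^{2}$; both routes yield the same conclusion.
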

\begin{proof}
Let $H_2(\cdot)$ denote the binary entropy, i.e., $H_2(p)=-p\log p-(1-p)\log(1-p)$ for $p\in (0,1)$. Suppose $\bl$ is some positive integer divisible by $100$. Then it is well known that 
\begin{equation}\label{eq:entropy}
\sum_{i=0}^k\binom{\bl}{i}\leq 2^{\bl\cdot H_2(k/\bl)}
\end{equation}
for all integers $k$ with $0<k\leq \bl/2$; cf.~\cite[Lemma 16.19]{FlumGroheBook}. Let us denote
$$A=\sum_{i=0}^{(1/2-\orth)\bl} \binom{\bl}{i} + \sum_{i=(1/2+\orth)\bl}^\bl \binom{\bl}{i}.$$
Then from \eqref{eq:entropy} it follows that
$$A\leq 2\cdot 2^{\sigma\bl},$$
where $\sigma=H_2(1/2-\orth)<1$.

Suppose now that $\bl=C\cdot \lceil \log n\rceil$ for some positive integers $C$ and $n>1$, where $C$ is divisible by $100$. Then
\begin{eqnarray*}
n(\bl+1)\cdot A & \leq & 2n \cdot (C\lceil \log n\rceil+1)\cdot 2^{\sigma\cdot C\lceil \log n\rceil} \\
& \leq & 2\cdot (2C+1)\cdot 2^{\sigma C}\cdot n\log n \cdot 2^{\sigma\cdot C\log n}\\
& \leq & (4C+2)\cdot 2^{\sigma C}\cdot n^{\sigma C+2}.
\end{eqnarray*}
Since $\sigma<1$, we can choose $C$ to be an integer divisible by $100$ so that $\sigma C+2<C$. Then, we can choose $N$ large enough so that
$$(4C+2)\cdot 2^{\sigma C}\cdot n^{\sigma C+2} \leq n^C$$
for all integers $n>N$. Hence,
\begin{equation}\label{eq:greedy}
nA\leq \frac{n^C}{\bl+1}.
\end{equation}
We now verify that this choice of $C,N$ satisfies the required properties.

Consider the following greedy procedure performed on $\{0,1\}^\ell$. Start with $\Sel=\emptyset$ and all strings of $\{0,1\}^\bl$ marked as unused. In consecutive rounds perform the following:
\begin{enumerate}
\item Pick any $x\in \{0,1\}^\bl$ with $\Hw(x)=\bl/2$ that was not yet marked as used, and add $x$ to $\Sel$.
\item Mark every $y\in \{0,1\}^\bl$ with $\Ham(x,y)\leq(1/2-\orth)\ell$ or $\Ham(x,y)\geq(1/2+\orth)\bl$ as used.
\end{enumerate}
It is clear that at each step of the procedure, the constructed family $\Sel$ satisfies properties~\eqref{pr:hamw} and~\eqref{pr:orth}. Hence, it suffices to prove that the procedure can be performed for at least $n$ rounds. 

Note that the number of strings marked as used at each round is at most $A$. On the other hand, if $\mathcal{D}$ is the set of strings from $\{0,1\}^\bl$ that have Hamming weight exactly $\bl/2$, then 
$$|\mathcal{D}|\geq \frac{|\{0,1\}^\bl|}{\bl+1}=\frac{2^{C\lceil \log n\rceil}}{\bl+1}\geq \frac{n^C}{\bl+1}.$$
From \eqref{eq:greedy} we infer that $|\mathcal{D}|\geq nA$. This means that the algorithm will be able to find an unmarked $x\in \mathcal{D}$ for at least $n$ rounds, and hence to construct the family $\Sel$ with $|\Sel|=n$. It is easy to implement the algorithm in polynomial time using the fact that the size of $\{0,1\}^\bl$ is polynomial in $n$.  
\end{proof}

From now on, we adopt the constants $C,N$ given by Lemma~\ref{lem:selection-strings} to the notation. Let us also fix $n>N$; then let $\bl=C\cdot \lceil \log n\rceil$ and $\Sel$ be the set of strings given by Lemma~\ref{lem:selection-strings}, which we shall call {\em{selection strings}}. We define the set of {\em{forbidden strings}} $\Forb=\Forb(\Sel)$ as follows:
$$\Forb = \{ y \colon y\in \{0,1\}^\bl\textrm{ and $\Ham(x,y)\leq (1-\bLarge) \bl$ for all $x\in \Sel$}\}.$$
In other words, $\Forb$ comprises all the strings that are not almost diametrically opposite to some string from $\Sel$. The following lemma asserts the properties of $\Sel$ and $\Forb$ that we shall need later on.

\begin{lemma}\label{lem:forb}
Suppose $u\in \{0,1\}^\bl$. Then the following assertions hold:
\begin{enumerate}
\item\label{forb:in} If $u\in \Sel$, then $\Ham(u,y)\leq (1-\bLarge) \bl$ for each $y\in \Forb$.
\item\label{forb:far} If $\Ham(x,u)\geq \bSmall \bl$ for all $x\in \Sel$, then there exists $y\in \Forb$ such that $\Ham(u,y)\geq (1-\bSmall)\bl$.
\end{enumerate}
\end{lemma}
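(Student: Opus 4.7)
Part (1) is immediate from the definition of $\Forb$: if $u\in \Sel$, then taking $x=u$ in the defining condition of $\Forb$ gives $\Ham(u,y)\leq (1-\bLarge)\bl$ for every $y\in \Forb$.

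For part (2), the natural first guess is $y=\diam{u}$, which immediately satisfies $\Ham(u,y)=\bl\geq (1-\bSmall)\bl$. For $y$ to lie in $\Forb$ we would need $\Ham(x,\diam{u})=\bl-\Ham(x,u)\leq(1-\bLarge)\bl$, i.e., $\Ham(x,u)\geq\bLarge \bl$, for every $x\in \Sel$. The hypothesis only gives the weaker bound $\bSmall \bl$, so $y=\diam{u}$ works directly provided no $x\in \Sel$ lies within distance $\bLarge \bl$ of $u$; this disposes of the easy case. The heart of the proof is the remaining case, in which some $x^*\in \Sel$ satisfies $\bSmall \bl\leq \Ham(x^*,u)<\bLarge \bl$. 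The first thing I would show is that such an $x^*$ is unique: if two candidates $x,x'$ both lay within distance $\bLarge \bl$ of $u$, the triangle inequality would force $\Ham(x,x')<2\bLarge \bl=\bl/5$, contradicting property~\eqref{pr:orth} which ensures $\Ham(x,x')>(1/2-\orth)\bl$.

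Given the unique offender $x^*$, I would construct $y$ by starting from $\diam{u}$ and flipping exactly $k:=\bLarge \bl-\Ham(x^*,u)$ coordinates chosen from the set $P$ of positions where $u$ and $x^*$ agree (note $|P|=\bl-\Ham(x^*,u)>\bl/2$, so there is room). Each such flip toggles $\diam{u}(p)$ back to $u(p)$, and because $u(p)=x^*(p)$ on $P$, this simultaneously decreases $\Ham(y,x^*)$ and $\Ham(y,u)$ by one. The result is $\Ham(y,x^*)=(1-\bLarge)\bl$ on the nose, and $\Ham(y,u)=\bl-k$. The inequality $\Ham(y,u)\geq (1-\bSmall)\bl$ is equivalent to $k\leq \bSmall\bl$, which follows from the hypothesis $\Ham(x^*,u)\geq\bSmall \bl$ combined with the arithmetic identity $\bLarge-\bSmall=\bSmall$ built into the choice of constants. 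For any other $x\in \Sel$, I would bound $\Ham(x,y)\leq \Ham(x,\diam{u})+k$ by the triangle inequality (since $y$ and $\diam{u}$ differ on only $k$ positions), and then bound $\Ham(x,\diam{u})=\bl-\Ham(x,u)$ by feeding $\Ham(x,u)\geq \Ham(x,x^*)-\Ham(x^*,u)>(1/2-\orth)\bl-\bLarge \bl$ through the triangle inequality with $x^*$. Plugging in $\orth=1/100$, $\bLarge=1/10$, $\bSmall=1/20$ yields $\Ham(x,y)<(1/2+\orth+\bLarge)\bl+\bSmall\bl=(66/100)\bl$, safely below $(1-\bLarge)\bl=(90/100)\bl$, so $y\in \Forb$.

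The routine parts are the two numerical verifications. The main obstacle, as I see it, is identifying the right structural observation that enables a clean repair of $\diam{u}$: namely, that at most one $x^*\in \Sel$ can obstruct direct membership of $\diam{u}$ in $\Forb$, and that the obstruction can be removed using at most $\bSmall \bl$ bit flips—which is precisely the budget allowed by the gap $\bLarge-\bSmall$. Once this is in place, all remaining inequalities are forced by the constants.
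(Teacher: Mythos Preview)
Your proof is correct and follows essentially the same approach as the paper: both start from $\diam{u}$, identify the (necessarily unique) offending $x^*\in\Sel$ too close to $u$, and repair $\diam{u}$ by flipping at most $\bSmall\bl$ bits on positions where $u$ and $x^*$ agree. The only cosmetic differences are that the paper flips exactly $\bSmall\bl$ bits (rather than your variable $k\leq\bSmall\bl$) and verifies $y\in\Forb$ by contradiction rather than by your direct bound, but the underlying idea and the numerical checks are the same.
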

\begin{proof}
Property~\eqref{forb:in} follows directly from the definition of $\Forb$, so we proceed to the proof of~\eqref{forb:far}.

Suppose $\Ham(u,x)\geq \bSmall \bl$ for all $x\in \Sel$. If $\diam{u}\in \Forb$, then we could take $y=\diam{u}$, so suppose that $\diam{u}\notin \Forb$. This means that there exists $x_0\in \Sel$, for which $\Ham(x_0,\diam{u})>(1-\bLarge)\bl$; equivalently, $\Ham(\diam{x_0},\diam{u})<\bLarge \bl$. On the other hand, we have that $\Ham(x_0,u)\geq \bSmall \bl$, so also $\Ham(\diam{x_0},\diam{u})\geq\bSmall \bl$. Construct $y$ from $\diam{u}$ by taking any set of positions $X$ of size $\bSmall\bl$ on which $\diam{u}$ and $\diam{x_0}$ have the same letters, and flipping the letters on these positions (replacing $0$s with $1$s and vice versa). Such a set of positions always exists because $\bLarge+\bSmall<1$. Then we have that $\Ham(\diam{x_0},y)=\Ham(\diam{x_0},\diam{u})+\bSmall\bl$, which implies that
$$\bLarge \bl =\bSmall \bl+\bSmall\bl\leq \Ham(\diam{x_0},y)< (\bLarge+\bSmall)\bl.$$

We claim that $y\in \Forb$; suppose otherwise. Since $\Ham(\diam{x_0},y)\geq \bLarge \bl$, then also $\Ham(x_0,y)\leq (1-\bLarge)\bl$. As $y\notin \Forb$, there must exist some $x_1\in \Sel$, $x_0\neq x_1$, such that $\Ham(x_1,y)>(1-\bLarge)\bl$; equivalently $\Ham(\diam{x_1},y)<\bLarge\bl$. Hence, from the triangle inequality we infer that
$$\Ham(x_0,x_1)=\Ham(\diam{x_0},\diam{x_1})\leq \Ham(\diam{x_0},y)+\Ham(y,\diam{x_1})<(2\bLarge+\bSmall)\bl.$$
This is a contradiction with the assumption that $\Ham(x_0,x_1)\geq (1/2-\orth)\bl$, which is implied by $x_0,x_1\in \Sel$. Indeed, we have that $2\bLarge+\bSmall=\frac{1}{4}<\frac{49}{100}=1/2-\orth$.

Hence $y\in \Forb$. By definition we have that $\Ham(\diam{u},y)=\bSmall\bl$, which implies that $\Ham(u,y)=(1-\bSmall)\bl$. Thus, $y$ satisfies the required properties.
\end{proof}

\section{Main construction}\label{sec:construction}
\newcommand{\Cnstsel}{\Cnst_{\textrm{sel}}}
\newcommand{\Cnstadj}{\Cnst_{\textrm{adj}}}
\newcommand{\balOff}{\gamma}
\newcommand{\map}{\iota}

In this section we provide the proof of Theorem~\ref{thm:main}. Let $(G,k)$ be the input instance of \clique, and let $n=|V(G)|$. Let $C,N$ be the constants given by Lemma~\ref{lem:selection-strings}. We can assume that $n>N$, because otherwise the instance $(G,k)$ can be solved in constant time. Let $\bl=C\lceil \log n\rceil$. We run the polynomial-time algorithm given by Lemma~\ref{lem:selection-strings} that computes the set $\Sel\subseteq \{0,1\}^\bl$ of selection strings. Let $\Forb=\Forb(\Sel)$ be the set of forbidden strings, as defined in Section~\ref{sec:selection}. Note that $\Forb$ can be computed in polynomial time directly from the definition, due to $|\{0,1\}^\bl|=n^{\Oh(1)}$.

We now present the construction of the output instance $(\Cnst,\Len,\dst)$ of \clString. Set $\Len=k\bl+\balOff \bl$, where $\balOff=\orth+\bLarge=\frac{11}{100}$, and partition the set $[\Len]$ of positions in strings of length $\Len$ into $k+1$ {\em{blocks}}:
\begin{itemize}
\item $k$ blocks $B_i$ for $i\in [k]$ of length $\bl$ each, where $B_i=\{(i-1)\bl+1,(i-1)\bl+2,\ldots,i\bl\}$;
\item special {\em{balancing block}} $C$ of length $\balOff \bl$, where $C=\{k\bl+1,k\bl+2,\ldots,\Len\}$.
\end{itemize}
For $w\in \{0,1\}^\Len$ and a contiguous subset of positions $X$, by $w[X]$ we denote the substring of $w$ formed by positions from $X$.

Let us first discuss the intuition. The choice the solution string makes on consecutive blocks $B_i$ will encode a selection of a $k$-tuple of vertices in $G$. Vertices of $G$ will be mapped one-to-one to strings from $\Sel$. The family of constraint strings $\Cnst$ will consist of two subfamilies $\Cnstsel$ and $\Cnstadj$ with the following roles:
\begin{itemize}
\item Strings from $\Cnstsel$ ensure that on each block $B_i$, the solution picks a substring that is close to some element of $\Sel$. The selection of this element encodes the choice of the $i$th vertex from the $k$-tuple.
\item Strings from $\Cnstadj$ verify that vertices of the chosen $k$-tuple are pairwise different and adjacent, and hence they form a clique.
\end{itemize}
A small technical caveat is that for strings from $\Cnstsel$ and from $\Cnstadj$, the intended Hamming distance from the solution string will be slightly different. The role of the balancing block $C$ is to equalize this distance by a simple additional construction.

We proceed to the formal description. Since $|V(G)|=|\Sel|$, let $\map\colon V(G)\to \Sel$ be an arbitrary bijection. 

The family $\Cnstsel$ consists of strings $a(i,y,\phi,z)$, for all $i\in [k]$, $y\in \Forb$, $\phi$ being a function from $[k]\setminus \{i\}$ to $\{0,1\}$, and $z$ being a binary string of length $\balOff \bl$. String $a(i,y,\phi,z)$ is constructed as follows:
\begin{itemize}
\item On block $B_i$ put the string $y$.
\item For each $j\in [k]\setminus \{i\}$, on block $B_j$ put a string consisting of $\bl$ zeroes if $\phi(j)=0$, and a string consisting of $\bl$ ones if $\phi(j)=1$.
\item On balancing block $C$ put the string $z$.  
\end{itemize}
Thus, $|\Cnstsel|=k\cdot |\Forb|\cdot 2^{k-1}\cdot 2^{\balOff \bl}\leq 2^k\cdot n^{\Oh(1)}$. Also, $\Cnstsel$ can be constructed in time $2^k\cdot n^{\Oh(1)}$ directly from the definition.

The family $\Cnstadj$ consists of strings $b(i,j,(u,v),\psi)$, for all $i,j\in [k]$ with $i<j$, $(u,v)$ being an ordered pair of vertices of $G$ that are either equal or non-adjacent, and $\psi$ being a function from $[k]\setminus \{i,j\}$ to $\{0,1\}$. String $b(i,j,(u,v),\psi)$ is constructed as follows:
\begin{itemize}
\item On block $B_i$ put the string $\diam{\map(u)}$.
\item On block $B_j$ put the string $\diam{\map(v)}$.
\item On block $B_q$, for $q\in [k]\setminus \{i,j\}$, put a string consisting of $\bl$ zeroes if $\psi(q)=0$, and a string consisting of $\bl$ ones if $\psi(q)=1$.
\item On balancing block $C$ put a string consisting of $\balOff\bl$ zeroes.
\end{itemize}
Thus, $|\Cnstadj|\leq \binom{k}{2}\cdot n^2\cdot 2^{k-2}\leq 2^k\cdot n^{\Oh(1)}$. Again, $\Cnstadj$ can be constructed in time $2^k\cdot n^{\Oh(1)}$ directly from the definition.

Set $\Cnst=\Cnstsel\cup \Cnstadj$ and $\dst=(k/2+1/2+\orth)\cdot \bl$. This concludes the construction. Its correctness will be verified in two lemmas that mirror the properties listed in Theorem~\ref{thm:main}.

\begin{lemma}\label{lem:completeness}
If $G$ contains a clique on $k$ vertices, then there exists a string $w\in \{0,1\}^\Len$ such that $\Ham(w,x)\leq d$ for each $x\in \Cnst$.
\end{lemma}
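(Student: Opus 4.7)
Let $v_1, v_2, \ldots, v_k$ be the vertices of a $k$-clique in $G$. The natural candidate for the solution string is obtained by writing the identifiers of these vertices on the blocks $B_1, \ldots, B_k$, and zeros on the balancing block $C$: formally, define $w\in \{0,1\}^\Len$ by $w[B_i] = \map(v_i)$ for each $i\in [k]$ and $w[C] = 0^{\balOff \bl}$. I will then verify that $\Ham(w, x)\leq d$ for every $x\in \Cnstsel\cup \Cnstadj$ by computing block-wise contributions and summing them.

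\textbf{Strings from $\Cnstsel$.} Fix $x = a(i, y, \phi, z)$. On block $B_i$ we compare $\map(v_i)\in \Sel$ with $y\in \Forb$, so Lemma~\ref{lem:forb}\eqref{forb:in} yields a contribution of at most $(1-\bLarge)\bl$. On every other block $B_j$ the string $x$ is constant (all $0$s or all $1$s), and since $\Hw(\map(v_j)) = \bl/2$ by Lemma~\ref{lem:selection-strings}\eqref{pr:hamw}, the contribution is exactly $\bl/2$. On the balancing block $C$ the contribution is $\Hw(z)\leq \balOff\bl$. Summing gives at most $(1-\bLarge)\bl + (k-1)\bl/2 + \balOff\bl = (k/2 + 1/2 + \orth)\bl = d$, using $\balOff = \orth+\bLarge$.

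\textbf{Strings from $\Cnstadj$.} Fix $x = b(i, j, (u, v), \psi)$ with $i<j$. Since $v_1, \ldots, v_k$ form a clique, $v_i$ and $v_j$ are distinct and adjacent; by the definition of $\Cnstadj$ the pair $(u, v)$ is either an equal pair or a non-adjacent pair, so $(u, v)\neq (v_i, v_j)$. On block $C$ the contribution is $0$, while on the blocks $B_q$ with $q\notin\{i,j\}$ it is exactly $\bl/2$ each, as before. For the two critical blocks I split into cases: if $u = v_i$ then $v\neq v_j$, so block $B_i$ contributes $\Ham(\map(v_i), \diam{\map(v_i)}) = \bl$ and block $B_j$ contributes $\bl - \Ham(\map(v_j), \map(v)) < (1/2+\orth)\bl$ by Lemma~\ref{lem:selection-strings}\eqref{pr:orth}; symmetrically if $v = v_j$. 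Otherwise both $\map(u)\neq \map(v_i)$ and $\map(v)\neq \map(v_j)$, and each of blocks $B_i$ and $B_j$ contributes strictly less than $(1/2+\orth)\bl$. In either case the total is at most $\bl + (1/2+\orth)\bl + (k-2)\bl/2 = (k/2+1/2+\orth)\bl = d$, with the ``symmetric'' case being the tighter one.

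\textbf{Main obstacle.} The argument is essentially bookkeeping, but the tightness is the main point to get right: the construction has deliberately calibrated the values $\orth, \bLarge, \balOff$ and the threshold $d$ so that both families $\Cnstsel$ and $\Cnstadj$ saturate the bound $d$ on their worst inputs. The only genuinely delicate step is to ensure that the choice $w[C] = 0^{\balOff \bl}$ works simultaneously for both families: $\Cnstadj$ forces $w[C]$ to be all zeros (because block $C$ of every $b(\cdot)$-string is all zeros and the other blocks already saturate $d$), while $\Cnstsel$ allows arbitrary $w[C]$ (because the room left on the other blocks is precisely $\balOff \bl$, matching the maximal Hamming distance inside block $C$). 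Once this interplay is checked, the rest is a direct invocation of Lemmas~\ref{lem:selection-strings} and~\ref{lem:forb}.
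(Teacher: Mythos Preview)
Your proposal is correct and follows essentially the same approach as the paper: define $w$ by placing $\map(v_i)$ on each block $B_i$ and zeros on $C$, then bound the Hamming distance block-by-block using Lemma~\ref{lem:forb}\eqref{forb:in} for $\Cnstsel$ and property~\eqref{pr:orth} of Lemma~\ref{lem:selection-strings} for $\Cnstadj$. The only cosmetic difference is that for $\Cnstadj$ the paper argues ``without loss of generality $c_i\neq u$'' and then uses the trivial bound $\bl$ on block $B_j$, whereas you spell out the three cases explicitly; both lead to the same estimate $\bl + (1/2+\orth)\bl + (k-2)\bl/2 = d$.
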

\begin{proof}
Let $\{c_1,c_2,\ldots,c_k\}$ be a $k$-clique in $G$. Construct $w$ by putting $\map(c_i)$ on block $B_i$, for each $i\in [k]$, and zeroes on all the positions of the balancing block $C$.

First, take any string $a=a(i,y,\phi,z)\in \Cnstsel$. Since $\map(c_i)\in \Sel$ and $y\in \Forb$, by Lemma~\ref{lem:forb}\eqref{forb:in} we infer that $\Ham(w[B_i],a[B_i])=\Ham(\map(c_i),y)\leq (1-\bLarge) \bl$. For each $j\in [k]\setminus \{i\}$, since $\Hw(\map(c_j))=\bl/2$ due to $\map(c_j)\in \Sel$, we have that $\Ham(w[B_j],a[B_j])=\Ham(\map(c_j),a[B_j])=\bl/2$, regardless of the value of $\phi(j)$. Finally, obviously $\Ham(w[C],a[C])\leq |C|=\balOff\bl$. Hence 
$$\Ham(w,a)\leq (1-\bLarge) \bl+(k-1)\bl/2+\balOff\bl=d.$$

Second, take any string $b=b(i,j,(u,v),\psi)\in \Cnstadj$. Since $c_i$ and $c_j$ are different and adjacent, whereas $u$ and $v$ are either equal or non-adjacent, we have $(c_i,c_j)\neq (u,v)$. Without loss of generality suppose that $c_i\neq u$; the second case will be symmetric. Then $\Ham(w[B_i],b[B_i])=\Ham(\map(c_i),\diam{\map(u)})\leq (1/2+\orth)\bl$, due to property \eqref{pr:orth} of Lemma~\ref{lem:selection-strings}. Obviously, $\Ham(w[B_j],b[B_j])\leq |B_j|\leq \bl$. Finally, for every $q\in [k]\setminus \{i,j\}$ we have that $\Hw(\map(c_q))=\bl/2$, and hence $\Ham(w[B_q],b[B_q])=\Ham(\map(c_q),b[B_q])=\bl/2$, regardless of the value of $\psi(q)$. Strings $w$ and $b$ match on positions of $C$, so $\Ham(w[C],b[C])=0$. Summarizing,
$$\Ham(w,b)\leq (1/2+\orth)\bl+\bl+(k-2)\bl/2=d.$$\end{proof}

\begin{lemma}\label{lem:soundness}
If there is a string $w\in \{0,1\}^\Len$ such that $\Ham(w,x)<d+\bSmall \bl$ for each $x\in \Cnst$, then $G$ contains a clique on $k$ vertices.
\end{lemma}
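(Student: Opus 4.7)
The plan is to recover from $w$ a candidate $k$-tuple $(v_1,\ldots,v_k)$ of vertices by interpreting the blocks $B_1,\ldots,B_k$ via the selection strings, and then to argue that this tuple must form a $k$-clique. Both conclusions will be drawn by contradiction, each time exhibiting a single string in $\Cnst$ whose Hamming distance to $w$ violates the assumption $\Ham(w,x)<d+\bSmall\bl$.

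First, I would use $\Cnstsel$ to show that for every $i\in[k]$ there exists $v_i\in V(G)$ with $\Ham(w[B_i],\map(v_i))<\bSmall\bl$. Suppose this fails for some fixed $i$. Then $\Ham(w[B_i],x)\geq \bSmall\bl$ for every $x\in\Sel$, so Lemma~\ref{lem:forb}\eqref{forb:far} yields some $y\in\Forb$ with $\Ham(w[B_i],y)\geq (1-\bSmall)\bl$. Plugging this $y$ into $a(i,y,\phi,z)\in\Cnstsel$, I would choose $\phi(j)$ to be the minority letter of $w[B_j]$ (so the constant block contributes at least $\bl/2$) and $z=\diam{w[C]}$ (so the balancing block contributes exactly $\balOff\bl$). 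A direct summation yields $\Ham(w,a)\geq (1-\bSmall)\bl+(k-1)\bl/2+\balOff\bl$; using $\balOff=\orth+\bLarge$ together with the numerical identity $\bLarge=2\bSmall$, this expression is exactly $d+\bSmall\bl$, a contradiction.

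Second, I would use $\Cnstadj$ to show that $(v_1,\ldots,v_k)$ forms a clique. If not, some pair $(v_i,v_j)$ with $i<j$ is equal or non-adjacent in $G$, and hence admissible in the definition of $\Cnstadj$. I then look at $b(i,j,(v_i,v_j),\psi)\in\Cnstadj$ with $\psi$ chosen adversarially as before. Since $b[B_i]=\diam{\map(v_i)}$ and $\Ham(w[B_i],\map(v_i))<\bSmall\bl$ by the first step, we have $\Ham(w[B_i],b[B_i])>(1-\bSmall)\bl$, and symmetrically for $B_j$. For each $q\in[k]\setminus\{i,j\}$, the adversarial $\psi(q)$ contributes at least $\bl/2$ on $B_q$, while the balancing block $C$ contributes the nonnegative quantity $\Hw(w[C])$. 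Even discarding the $C$-term, one obtains $\Ham(w,b)>2(1-\bSmall)\bl+(k-2)\bl/2=k\bl/2+(1-2\bSmall)\bl$, which comfortably exceeds $d+\bSmall\bl=k\bl/2+(1/2+\orth+\bSmall)\bl$ for the fixed constants, contradicting the hypothesis.

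The main technical point is the bookkeeping of the constants $\orth,\bLarge,\bSmall,\balOff$. The delicate step is the selection step: the inequality there is sharp, and the balancing block $C$ exists precisely to close the gap between the selection constraint and the intended threshold $d+\bSmall\bl$, which is exactly why $\balOff=\orth+\bLarge$ was chosen. The adjacency step is in contrast generously slack, because two blocks rather than one are forced to be nearly antipodal to their targets; this is consistent with the fact that $\Cnstadj$ places all zeros on $C$, so $C$ plays no active role on the adjacency side.
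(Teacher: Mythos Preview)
Your proof is correct and follows essentially the same approach as the paper: first use $\Cnstsel$ together with Lemma~\ref{lem:forb}\eqref{forb:far} and adversarial choices of $\phi$ and $z=\diam{w[C]}$ to force each block $w[B_i]$ to be $\bSmall\bl$-close to some selection string, and then use $\Cnstadj$ with adversarial $\psi$ to certify that the selected vertices form a clique. The paper also records the uniqueness of the selection string in each block (via the triangle inequality and property~\eqref{pr:orth} of Lemma~\ref{lem:selection-strings}), but this is not needed for the conclusion and your omission of it is harmless.
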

\begin{proof}
We first prove that on each block $B_i$, $w$ is close to selecting an element of $\Sel$.
\begin{claim}\label{cl:vertices}
For each $i\in [k]$ there exists a unique $x_i\in \Sel$ such that $\Ham(w[B_i],x_i)<\bSmall \bl$.
\end{claim}
\begin{proof}
Uniqueness follows directly from property \eqref{pr:orth} of Lemma~\ref{lem:selection-strings} and the triangle inequality, so it suffices to prove existence.

Let $u=w[B_i]$. For the sake of contradiction, suppose $\Ham(u,x)\geq \bSmall \bl$ for each $x\in \Sel$. From Lemma~\ref{lem:forb}\eqref{forb:far} we infer that there exists $y\in \Forb$ such that $\Ham(u,y)\geq (1-\bSmall)\bl$. Let us take $\phi\colon [k]\setminus \{i\}\to \{0,1\}$ defined as follows: $\phi(j)=0$ if in $w$ the majority of positions of $B_j$ contain a one, and $\phi(j)=1$ otherwise. Also, define $z=\diam{w[C]}$. Consider string $a=a(i,y,\phi,z)\in \Cnstsel$. Then, it follows that
\begin{itemize}
\item $\Ham(w[B_i],a[B_i])=\Ham(u,y)\geq (1-\bSmall)\bl$;
\item $\Ham(w[B_j],a[B_j])\geq \bl/2$ for each $j\in [k]\setminus \{i\}$;
\item $\Ham(w[C],a[C])=\Ham(w[C],\diam{w[C]})=|C|=\balOff \bl$.
\end{itemize}
Consequently,
$$\Ham(w,a)\geq (1-\bSmall)\bl+(k-1)\bl/2+\balOff\bl=d+\bSmall\bl.$$
This is a contradiction with the assumption that $\Ham(w,x)<d+\bSmall \bl$ for each $x\in \Cnst$.
\cqed\end{proof}

For each $i\in [k]$, let $c_i=\map^{-1}(x_i)$.

\begin{claim}\label{cl:edges}
For all $i,j\in [k]$ with $i<j$, vertices $c_i$ and $c_j$ are different and adjacent.
\end{claim}
\begin{proof}
For the sake of contradiction, suppose $c_i$ and $c_j$ are either equal or non-adjacent. Define $\psi\colon [k]\setminus \{i,j\}\to \{0,1\}$ as follows: $\psi(q)=0$ if in $w$ the majority of positions of $B_q$ contain a one, and $\psi(q)=1$ otherwise. Then, for $(c_i,c_j)$ we have constructed string $b=b(i,j,(c_i,c_j),\psi)\in \Cnstadj$. Observe now that
\begin{itemize}
\item $\Ham(w[B_i],b[B_i])=\Ham(w[B_i],\diam{x_i})>(1-\bSmall)\bl$, since $\Ham(w[B_i],x_i)<\bSmall\bl$;
\item Similarly, $\Ham(w[B_j],b[B_j])>(1-\bSmall)\bl$;
\item $\Ham(w[B_q],b[B_q])\geq \bl/2$ for each $q\in [k]\setminus \{i,j\}$;
\item $\Ham(w[C],b[C])\geq 0$.
\end{itemize}
Consequently,
$$\Ham(w,b)\geq 2(1-\bSmall)\bl+(k-2)\bl/2=(k/2+1-2\bSmall)\bl>d+\bSmall\bl.$$
This is a contradiction with the assumption that $\Ham(w,x)<d+\bSmall \bl$ for each $x\in \Cnst$.
\cqed\end{proof}

Claim~\ref{cl:edges} asserts that, indeed, $\{c_1,c_2,\ldots,c_k\}$ is a $k$-clique in $G$.
\end{proof}

Lemmas~\ref{lem:completeness} and~\ref{lem:soundness} conclude the proof of Theorem~\ref{thm:main}, where $c$ can be taken to be any constant larger than $\frac{\dst}{\bSmall\ell\cdot k}\leq \frac{2}{\bSmall}=40$.

\section{Conclusions}\label{sec:conclusions}
In this paper we have proved that \clString does not have an EPTAS under the assumption of $\mathrm{FPT}\neq\mathrm{W[1]}$. Moreover, under the stronger assumption of the Exponential Time Hypothesis, one can also exclude PTASes with running time $f(\eps)\cdot n^{o(1/\eps)}$, for any computable function $f$. However, the fastest currently known approximation scheme for \clString has running time $n^{\Oh(1/\eps^2)}$~\cite{MaS09}. This leaves a significant gap between the known upper and lower bounds. Despite efforts, we were unable to close this gap, and hence we leave it as an open problem.

\bibliographystyle{abbrv}
\bibliography{closest-string}

\end{document}